\documentclass[conference]{IEEEtran}
\IEEEoverridecommandlockouts
\usepackage{cite}
\usepackage{amsmath,amssymb,amsfonts,amsthm}
\usepackage{graphicx}
\usepackage{textcomp}
\usepackage{xcolor}
\usepackage{enumitem}
\usepackage{multirow}

\def\BibTeX{{\rm B\kern-.05em{\sc i\kern-.025em b}\kern-.025em b}\kern-.08em
 T\kern-.1667em\lower.7ex\hbox{E}\kern-.125emX}

\newtheorem{theorem}{Theorem}
\newtheorem{lemma}{Lemma}
\newtheorem{definition}{Definition}

\newtheorem*{definition*}{Definition}

\newcommand{\deltaSUT}{\delta_{\text{SUT}}}
\newcommand{\deltaStarSUT}{\delta_{\text{SUT}}^*}
\newcommand{\deltaInfra}{\delta_{\text{infra}}}
\newcommand{\SigmaInfra}{\Sigma_{\text{infra}}}
\newcommand{\SigmaConfig}{\Sigma_{\text{config}}}
\newcommand{\SigmaRead}{\Sigma_{\text{read}}}

\newcommand{\eval}{\mathrm{eval}}
\newcommand{\obs}{\mathrm{obs}}

\begin{document}

\title{NetAgentBench: A State-Centric Benchmark for Evaluating Agentic Network Configuration}

\author{\IEEEauthorblockN{Ahmed Twabi*}
\IEEEauthorblockA{\textit{Informatics \& Data Science} \\
\textit{Hiroshima University}\\
Hiroshima, Japan \\
iman-twabi@hiroshima-u.ac.jp}
\and
\IEEEauthorblockN{Yepeng Ding*}
\IEEEauthorblockA{\textit{Informatics \& Data Science} \\
\textit{Hiroshima University}\\
Hiroshima, Japan \\
yepengd@acm.org}
\and
\IEEEauthorblockN{Tohru Kondo}
\IEEEauthorblockA{\textit{Informatics \& Data Science} \\
\textit{Hiroshima University}\\
Hiroshima, Japan \\
tkondo@hiroshima-u.ac.jp}

}

\maketitle

\begin{abstract}
As agentic network management gains popularity, there is a critical need for evaluation frameworks that transcend static, one-shot testing. To address this, we introduce NetAgentBench, a dynamic benchmark that evaluates agent interactions through a Finite State Machine (FSM) formalization guaranteeing determinism, correctness, and bounded execution.
This provides the networking landscape with a rigorous foundation to measure complex, multi-turn operational behaviors. Our empirical evaluation of four state-of-the-art LLM agents through diverse network configuration tasks reveals stark deficiencies: while agents can solve basic tasks, they suffer severe exploration meltdowns and coherence collapse during expert-level configurations. 
Ultimately, NetAgentBench demonstrates that systematically evaluating multi-turn behavioral stability is an indispensable step toward realizing trustworthy, fully autonomous networks.
\end{abstract}

\begin{IEEEkeywords}
Network configuration, benchmarking, finite state machine, LLM evaluation, network automation
\end{IEEEkeywords}

\section{Introduction}
Large Language Models (LLMs) are driving a paradigm shift toward ``Zero-Touch'' network management automation \cite{b1, b2}, evolving from manual scripting to dynamic AI agents that iteratively observe, configure, and self-correct \cite{b3, b4}. However, deploying these opaque ``black-box'' agents in critical infrastructure introduces severe risks, such as hallucinated commands and cascading outages \cite{b5, b6}. 
Recent works explore applying LLMs to network configuration by adapting general models for networking tasks \cite{bWu}, LLM-based multi-agents for self organizing networks \cite{bQayyum}, and evaluating their capacity as virtual system administrators \cite{bDonadel}. Despite this growing interest, rigorous evaluation has lagged behind.
Essentially, there is a lack of empirical trust in the network management capabilities of LLMs, exacerbated by a fundamental \textbf{Measurement Gap} \cite{b7}: the field lacks standardized, state-aware benchmarks capable of rigorously evaluating an agent's operational reliability and behavioral stability across diverse network management tasks \cite{bQayyum}.

Moreover, effective evaluation requires \textit{formalization} to guarantee deterministic, bounded benchmarking despite agent stochasticity. Evaluations must prioritize \textit{behavioral analysis} \cite{b12}. Since configuration is iterative, benchmarks must capture multi-turn dynamics to expose failures like meltdowns and coherence collapse that emerge during active convergence \cite{b8}.

However, the landscape of LLM network configuration evaluation remains sparse. The IETF draft \cite{b9} proposes NetConfBench, an emulator-based framework evaluating single-shot agent outputs against ground truth. Similarly, NetConfEval \cite{b10} evaluates LLM configurations through static text matching. Both fail to capture the stateful, time-dependent nature of network operations where agents must interleave observation and configuration to resolve transient failures. 
Golden configuration benchmarks face an additional contamination risk: publishing ground-truth enables models to be trained on the answers, undermining evaluation validity \cite{b11}. 

Our method explicitly contrasts with broader, non-networking agent benchmarks. While AgentBench \cite{b12} evaluates general-purpose agents and VendingBench \cite{b13} analyzes long-context stability and economic convergence trajectories, no equivalent exists for network infrastructure. 
These empirical benchmarks also differ from formal verification tools like Minesweeper \cite{b14}. While SMT solvers guarantee static artifact safety, they cannot evaluate an \textit{agent's} active decision-making process during configuration.

These gaps motivate the following research questions:
\begin{enumerate}
 \item \textbf{Formalization:} Can we design a network configuration benchmark with formal guarantees (determinism, correctness, bounded execution) that safely bounds and evaluates adaptive multi-turn exploration?
 \item \textbf{Task Complexity:} How does increasing network task difficulty affect an agent's ability to successfully converge?
 \item \textbf{Behavioral Analysis:} Do state-centric, multi-turn evaluations expose qualitative failure modes in network configuration invisible to one-shot benchmarks?
\end{enumerate}

We present NetAgentBench: a dynamic, \textbf{state-centric benchmark} formalized as a Finite State Machine (FSM). FSMs provide the ideal abstraction because they evaluate the semantic consequences of agent actions, specifically the transitional dynamics and the validity of the resulting state, rather than focusing on syntactic output. Our core contributions address the identified gaps:
\begin{itemize}
 \item \textbf{Formal Framework (RQ1):} We model multi-turn benchmarking as interacting FSMs, and subsequently prove determinism, correctness, and bounded execution for agents exploring within time and turn constraints.
 \item \textbf{Empirical Evaluation (RQ2):} We evaluate agents against literature-grounded tasks (i.e., CCNA) of varying difficulty. Agents iteratively explore and reflect within bounded time and turn limits. Success is validated by checking the fulfillment of intended properties against the final network state.
 \item \textbf{Behavioral Analysis (RQ3):} We extract qualitative exploration patterns by observing the agent's internal thoughts, executed commands, and final score, exposing failure modes invisible to static checks.
\end{itemize}

\section{Background and Problem Definition}

Network configuration evaluation requires more than syntax checking. Network behavior is stateful and time-dependent: configurations working initially may fail on re-application, and protocols require convergence time. Existing benchmarks miss these dynamics, providing incomplete assessments of agent capabilities.

\subsection{Why Static Benchmarks Are Inadequate}
Static benchmarks compare agent configurations against golden references. This has four fundamental limitations:

\textbf{(1) Static matching misses runtime failures.} A command may be syntactically correct, yet fail at runtime. For example, enabling OSPF on an interface with \texttt{ip ospf 1 area 0} fails if: (a) OSPF process 1 does not exist, (b) the interface has no IP, or (c) a conflicting network statement exists. Static comparison cannot detect these because they depend on \textit{current state}, not syntax.

\textbf{(2) Golden configurations create contamination risk.} Publishing ground-truth configurations enables models to be trained on the answers \cite{b11}. A model computing perfectly against a golden config may fail unseen topologies because the benchmark measures memorization, not genuine reasoning capability.

\textbf{(3) No distinction between ``works once'' and ``works reliably.''} Static evaluations assess whether a configuration achieves its goal \textit{once}. They cannot measure idempotency---whether the same configuration remains correct when re-applied---which is critical for production scaling.

\textbf{(4) Inability to evaluate multi-turn dynamics.} Real-world configuration is iterative. Static evaluations are blind to multi-turn failure modes that only emerge during active convergence, such as exploration meltdowns (looping ineffective commands) and coherence collapse (destroying partially correct states).

\subsection{The Idempotency Problem in Automation}
Consider an LLM agent configuring OSPF between two routers. The agent generates:
\begin{verbatim}
router ospf 1
 network 10.0.0.0/24 area 0
 passive-interface default
 no passive-interface eth0
\end{verbatim}

\textbf{First application:} Starting from an unconfigured router, these commands succeed. OSPF process 1 is created, and eth0 is enabled for neighbor discovery. A static benchmark marks this as \textit{correct}.

\textbf{Re-application:} If re-applied (e.g., during drift correction or mid-exploration), \texttt{router ospf 1} succeeds harmlessly. However, \texttt{passive-interface default} temporarily resets \textit{all} interfaces to passive. The subsequent \texttt{no passive-interface eth0} re-enables it, but during this transient window, the OSPF adjacency may flap due to missed Hello packets.

\textbf{Consequence:} The configuration passes static evaluation but fails dynamic robustness testing.

\section{Framework Design}
\label{sec:framework}
We model the lifecycle as three interacting Finite State Machines (FSMs) \cite{b17}: an \textbf{Infrastructure FSM} for deterministic provisioning, a \textbf{System Under Test (SUT) FSM} for event-driven configuration, and a \textbf{Benchmark Controller} that orchestrates their interaction. 
Figure \ref{fig:key_separation} illustrates this architecture.
\begin{figure}[htbp]
 \centering
 \includegraphics[width=\columnwidth]{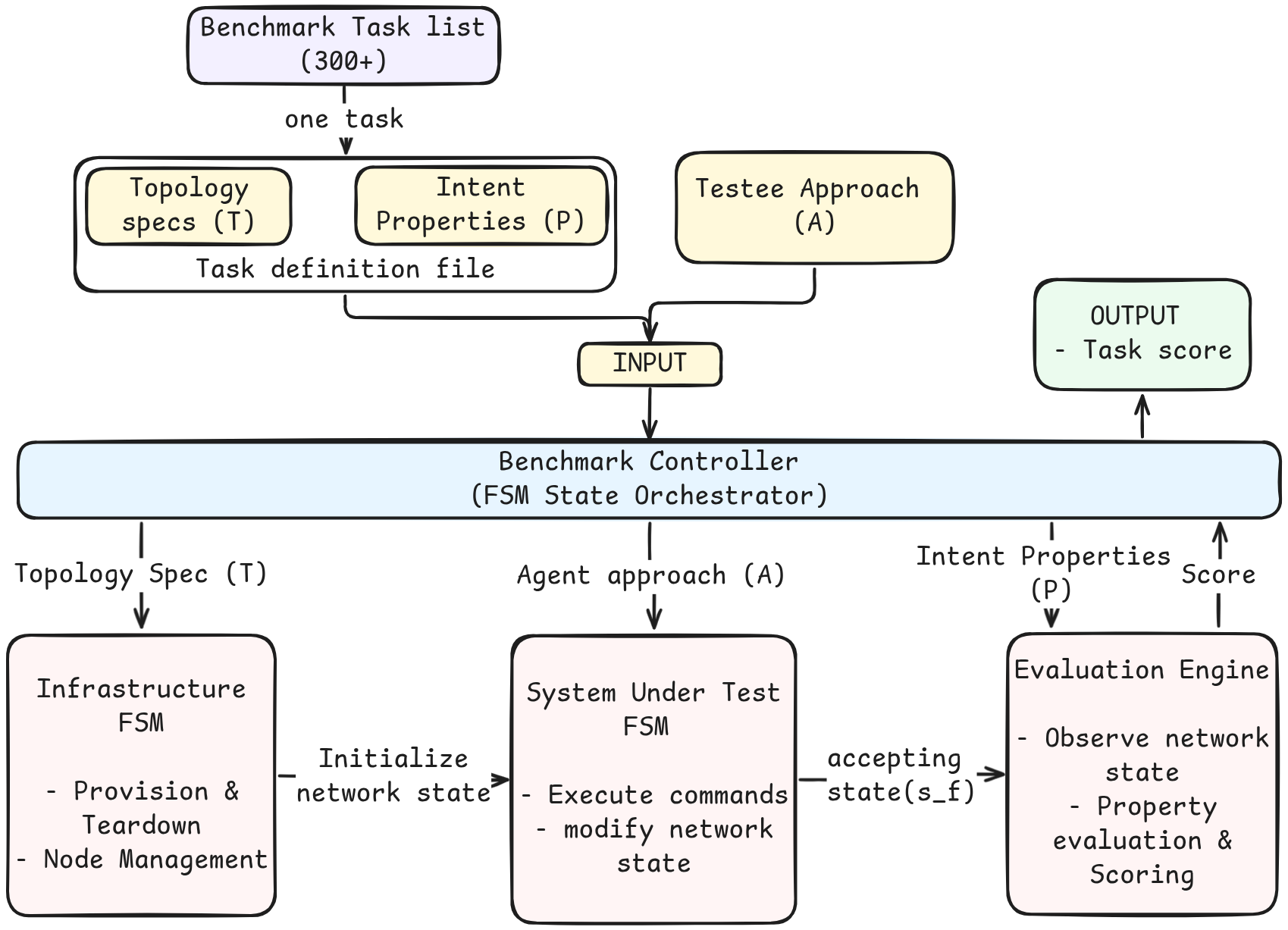}
 \caption{Benchmark architecture: $\mathcal{M}_{\text{bench}}$ orchestrates $\mathcal{M}_{\text{infra}}$ and $\mathcal{M}_{\text{SUT}}$ via the $\text{Initialize}$ bridge function.}
 \label{fig:key_separation}
\end{figure}

\subsection{Benchmark Inputs}
The process begins with a task definition that contains a Topology Specification ($T$) and Intent Specification ($P$) with strict success criteria. 
Essentially, a \textbf{topology specification} $T$ is a sequence of infrastructure commands:
 \[ T = \langle \tau_1, \tau_2, \dots, \tau_k \rangle \in \SigmaInfra^+ \]
where $T$ is the deterministic blueprint for constructing the network topology, the arrangement of the nodes, links, and their respective interfaces.

The intent specification $P$, on the other hand, is a set of properties that the network must satisfy. These predicates over network state serve as the success criterion for the task.
For example, the property ``the network is reachable from node A to node B'' is a predicate over the network state that is satisfied if and only if there is a configured path from node A to node B. 
Formally, an \textbf{intent specification} $P$ is a finite set of required properties:
\[ P = \{p_1, p_2, \dots, p_m\} \]
where each $p_i$ is a predicate over the network state (a condition) that can be evaluated to true or false.

The specifications $T$ and $P$ are presented to the agent, formally modeled as a \textbf{black-box oracle} $\mathcal{A}$ that interacts with the network over multiple turns. At turn $t$, $\mathcal{A}$ maps its \textbf{interaction history} $H_t = \langle (a_1, obs_1), \dots, (a_{t-1}, obs_{t-1}) \rangle$ (where $a_i \in \SigmaConfig \cup \SigmaRead$ and $obs_i = \text{obs}(s_i, a_i)$) to its next action:
\[ \mathcal{A}: \mathcal{H} \to \SigmaConfig \cup \SigmaRead \cup \{\mathsf{STOP}\} \]
Here, $\mathcal{H}$ is the space of all histories, $\SigmaConfig$ and $\SigmaRead$ are configuration and observation command sets, and $\mathsf{STOP}$ indicates the agent believes it has achieved $P$.

\subsection{Infrastructure FSM ($\mathcal{M}_{\text{infra}}$)}
The controller utilizes $T$, $P$, and $\mathcal{A}$ to drive the benchmark, passing $T$ to $\mathcal{M}_{\text{infra}}$ to deterministically provision the topology. This ensures identical starting conditions for every evaluation. Formally:
\[ \mathcal{M}_{\text{infra}} = (Q, \SigmaInfra, \deltaInfra, q_0, Q_{\text{accept}}, Q_{\text{error}}) \]
where:
\begin{itemize}[leftmargin=1.2em]
 \item \(Q = Q_{\text{normal}} \cup Q_{\text{accept}} \cup Q_{\text{error}}\): Finite set of infrastructure states
 \item \(\SigmaInfra\): Infrastructure command alphabet
 \item \(\deltaInfra: Q \times \SigmaInfra \to Q\): Infrastructure transition function
 \item \(q_0 \in Q_{\text{normal}}\): Initial state (no topology exists)
 \item \(Q_{\text{accept}} \subseteq Q\): Accepting states where topology is ready
 \item \(Q_{\text{error}} \subseteq Q\): Error states (provisioning failures etc)
\end{itemize}

The machine processes the sequence $T \in \SigmaInfra^+$, sequentially transitioning from $q_0$ to a state $q_r \in Q_{\text{accept}}$ where the network physically exists but has no protocol configuration.
If the sequence is invalid, the machine transitions to a state $q_e \in Q_{\text{error}}$. Hence, a topology specification $T$ is valid if and only if $\deltaInfra^*(q_0, T) = q_r \in Q_{\text{accept}}$, 
meaning that the sequence $T$ leads to an accepting state, a properly provisioned topology.

An infrastructure state $q \in Q$ is a tuple $q = (N_q, L_q)$ consisting of:
\begin{itemize}[leftmargin=1.2em]
 \item Set of provisioned nodes: $N_q = \{n_1, n_2, \dots\}$
 \item Set of established links: $L_q = \{(n_i, i_i, n_j, i_j), \dots\}$
 \item Deployment status: $\text{deployed}_q \in \{\mathsf{true}, \mathsf{false}\}$, set \textbf{true} iff transitioned via $\tau_k $ e.g., \texttt{deploy()} $\in T$.
\end{itemize}
A state $q_r \in Q_{\text{accept}}$ is an \textbf{accepting (ready) state} if:
\begin{enumerate}[leftmargin=1.4em]
 \item $|N_{q_r}| \geq 1$ (at least one node exists)
 \item $\text{deployed}_{q_r} = \mathsf{true}$ (topology is deployed and running)
\end{enumerate}

\subsection{System Under Test (SUT) FSM ($\mathcal{M}_{\text{SUT}}$)}
The agent $\mathcal{A}$ operates within the \textbf{System Under Test (SUT) FSM}, modeling the live interactive network environment. The SUT initializes from the provisioned topology via a total, injective function $\text{Initialize}: Q_{\text{accept}} \to S_{\text{stable}}$, where $s_0 = \text{Initialize}(q_r)$. This injectivity ($q_1 \neq q_2 \implies \text{Initialize}(q_1) \neq \text{Initialize}(q_2)$) ensures distinct topologies yield distinct initial SUT states, guaranteeing $s_0$ inherits the fixed topology from $q_r$ while initializing all protocol states to default.

The SUT begins in a \textbf{stable state} $s_0 \in S_{\text{stable}}$ where the network is physically provisioned but has no protocol configuration. Hence, the SUT FSM is a tuple:
\[ \mathcal{M}_{\text{SUT}} = (S, \Sigma, \deltaSUT, s_0, F, S_{\text{error}}) \]
\begin{itemize}[leftmargin=1.2em]
 \item $S = S_{\text{stable}} \cup S_{\text{pending}} \cup S_{\text{error}}$: Finite set of network states.
 \item $\Sigma = \SigmaConfig \cup \SigmaRead \cup \Sigma_{\text{events}}$: Action alphabet, where the convergence event set $\Sigma_{\text{events}} = \{\mathsf{converge\_success}, \mathsf{converge\_fail}, \mathsf{timeout}\}$ triggers autonomous transitions.
 \item $\deltaSUT: S \times \Sigma \to S$: Transition function.
 \item $s_0 \in S_{\text{stable}}$: Initial state 
 \item $F \subseteq S_{\text{stable}}$: Accepting states satisfying Intent $P$.
 \item $S_{\text{error}} \subseteq S$: Error states (e.g., configuration failures).
\end{itemize}

\begin{definition}[Extended Transition Function ($\deltaStarSUT$)]
 \label{def:extended_transition}
 Given an FSM with state space $S$ and transition function $\delta: S \times \Sigma \to S$, the \textbf{extended transition function} $\deltaStarSUT: S \times \Sigma^* \to S$ processes sequences of commands: $\deltaStarSUT(s, \epsilon) = s$ and $\deltaStarSUT(s, wa) = \delta(\deltaStarSUT(s, w), a)$ for sequence $w \in \Sigma^*$ and action $a \in \Sigma$.
\end{definition}

Under the \textbf{Event-Driven Convergence} model (Figure \ref{fig:event_driven_convergence}), a command $a \in \SigmaConfig$ transitions the system to $s_{\text{pending}}$. Subsequent protocol execution (e.g., OSPF flooding) is modeled as an autonomous \(\varepsilon\)-transition \cite{b17} spontaneously (without consuming an input symbol) resolving the pending state to either a new stable state $s' \in S_{\text{stable}}$ or an error state $s_e \in S_{\text{error}}$.

\begin{figure}[htbp]
 \centering
 \includegraphics[width=0.8\columnwidth]{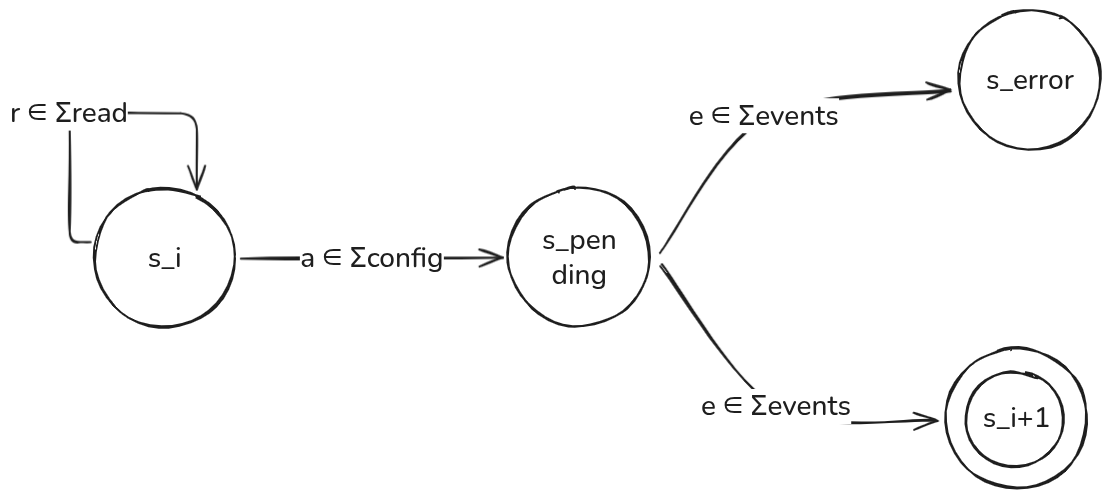}
 \caption{Event-Driven Convergence: A command transitions the FSM from $s_{\text{stable}}$ to $s_{\text{pending}}$, followed by an autonomous $\varepsilon$-transition to a new stable or error state.}
 \label{fig:event_driven_convergence}
\end{figure}

The \textbf{execution} of agent $\mathcal{A}$ from initial state $s_0$ proceeds iteratively for $t = 1, 2, \dots$ up to a maximum turn budget $K$:
\begin{enumerate}[leftmargin=1.4em]
 \item Agent selects action: $a_t = \mathcal{A}(H_t)$
 \item If $a_t = \mathsf{STOP}$ or $t > K$, execution terminates.
 \item Environment processes action to produce new state $s_t$:
 \[
 s_t = \begin{cases}
 \deltaSUT(\deltaSUT(s_{t-1}, a_t), \varepsilon), & a_t \in \SigmaConfig \\
 s_{t-1}, & a_t \in \SigmaRead
 \end{cases}
 \]
 \item Environment generates observation: $obs_t = \text{obs}(s_t, a_t)$
 \item History updates: $H_{t+1} = H_t \cdot \langle (a_t, obs_t) \rangle$
\end{enumerate}

The \textbf{final state} $s_f$ is the state $s_m$ at the termination turn $m \leq K$:
\[ s_f = s_m \in S_{\text{stable}} \cup S_{\text{error}} \]
The transition function $\deltaSUT$ implicitly encompasses the transitions through $s_{\text{pending}}$ and convergence via \(\varepsilon\)-transitions.  

The runtime \textbf{solution} $\hat{A}$ is the ordered subsequence of configuration commands from the trace $H_m$:
\[ \hat{A} = \langle a \in H_m \mid a \in \SigmaConfig \rangle \]
The benchmark evaluates the validity of this generated sequence $\hat{A}$, rather than the opaque agent $\mathcal{A}$ itself.

To ensure determinism despite network protocol asynchrony, we impose \textbf{Bounded Convergence}: protocol convergence must complete within $T_{\text{max}}$, otherwise the system transitions to an error state. This abstracts real-time protocol interactions into atomic \(\varepsilon\)-transitions. 

More precisely, a \textbf{network state} $s \in S$ captures the network's topology structure (inherited from $q_r = (N_{q_r}, L_{q_r}) \in Q_{\text{accept}}$), protocol configurations and runtime state, and its control plane convergence status $\text{converged}_s \in \{\mathsf{true}, \mathsf{false}\}$.

\textbf{Finite State Space via Abstraction:} The concrete state space of a real network is potentially infinite due to unbounded counters (packet counts, sequence numbers), continuous timers (BGP keepalive, OSPF dead interval), and timestamps \cite{b23, b24}. We define a finite abstract state space through rigorous abstraction:

\textbf{1. Topology finiteness:} Fixed by infrastructure state $q_r \in Q_{\text{accept}}$ where both nodes and links are finite ($|N_{q_r}| < \infty$, $|L_{q_r}| < \infty$).

\textbf{2. Configuration finiteness:} A \textbf{Platform} $D$ defines the set of syntactically valid configuration commands $\mathcal{L}_D \subseteq \SigmaConfig$. The platform language $\mathcal{L}_D$ is finite, with configurations bounded by $n_{\text{max}}$ commands. Hence, configuration space is finite ($\subseteq \mathcal{L}_D^{\leq n_{\text{max}}}$).

\textbf{3. Runtime state abstraction:} Map infinite concrete values to finite equivalence classes, limiting these infinite values to a small, finite set of categories that actually matter for the benchmark. \textit{Counter abstraction} \cite{b18}: $\alpha_{\text{counter}}(c) = c$ if $c \leq N_{\text{max}}$, else $\infty$. \textit{Timer abstraction} \cite{b20}: $\alpha_{\text{timer}}(t)$ maps $t \in \mathbb{R}_{\geq 0}$ to \text{active} ($t < T_{\text{warn}}$), \text{near\_expiry} ($T_{\text{warn}} \leq t < T_{\text{max}}$), or \text{expired} ($t \geq T_{\text{max}}$).

\textbf{4. Observable equivalence quotient} \cite{milner1989communication}\textbf{:} \label{def:observable_equivalence_quotient} Define $s_1 \approx_{\text{obs}} s_2$ iff $\forall r \in \SigmaRead: \text{obs}(s_1, r) = \text{obs}(s_2, r)$.
This is an equivalence relation that guarantees that if our observation tools are finite (we only have a limited list of commands we can run: see Lemma \ref{lem:observation_sufficiency}), then our state space is effectively finite, allowing us to ignore arbitrarily small differences between the two states.
We define the benchmark state space as the set of distinct observation tuples across all $r \in \SigmaRead$. Since $|\SigmaRead|$ is finite and each read command has a finite output domain, this state space is finite by construction.

This abstraction preserves correctness for benchmarking: properties in $P$ must depend only on observable behavior e.g., the output of a ping command.
Hence, the set of \textbf{accepting states} (states that satisfy the intent $P$) is:
\[ F = \{ s \in S \mid \forall p \in P: p(s) = \mathsf{true} \text{ and } \text{converged}_s = \mathsf{true} \} \]
A state $s \in F$ is both \textit{complete} (satisfies all properties) and \textit{stable} (converged).

\subsection{Evaluation Engine}
The evaluation of the final state $s_f$ is handled by a dedicated \textbf{Evaluation Engine}. This component systematically checks whether the intent specification $P$ is satisfied by $s_f$ and computes the corresponding score.

This process relies on the \textbf{Observation Function}, $\obs(s, r)$, which enables precise, read-only queries of the network state $s$. Crucially, our framework operates under the assumption of \textbf{Perfect Observability}: every property in $P$ can be checked by issuing a suitable sequence of read operations from $\SigmaRead$. 
The observation function $\obs(s, r)$ is designed to provide complete and accurate information about the current network state. This idealization simplifies our theoretical analysis.

Given property $p \in P$ and network state $s \in S$, the \textbf{single property evaluation function} is:
\[ \text{eval}: S \times P \to \{\mathsf{true}, \mathsf{false}\} \]
where $\text{eval}(s, p)$ non-invasively ($\forall r \in \SigmaRead: \deltaSUT(s, r) = s$) computes the truth value of property $p$ in state $s$ by evaluating the condition against a required set of observations $\text{obs}(s, r_1), \dots, \text{obs}(s, r_k)$.

The \textbf{full evaluation function} for intent specification $P = \{p_1, \dots, p_m\}$ is:
\[ \text{eval}(s, P) = \langle \text{eval}(s, p_1), \dots, \text{eval}(s, p_m) \rangle \]

This produces a tuple of boolean values indicating which properties hold i.e., an evaluation for each $p \in P$. Hence, the solution achieves the intent if $\text{eval}(s_f, P) = \langle \mathsf{true}, \dots, \mathsf{true} \rangle$.

\textbf{Scoring Metrics:} 
Given interaction trace $H_m$ with final state $s_f$ and evaluation results $\text{eval}(s_f, P)$:

\paragraph{Completeness Score}
Measures how many properties are satisfied:
\[
\text{Score}_C(H_m, P) = \frac{|\{p_i \in P \mid p_i(s_f) = \mathsf{true}\}|}{|P|}
\]
where $\text{Score}_C = 1$ if and only if $s_f \in F$.

\paragraph{Robustness Score} 
Measures idempotency. Let $s_f'$ be the state after re-executing the solution $\hat{A}$ (the configuration sequence produced by $\mathcal{A}$) starting from $s_f$:
\[
\text{Score}_R(H_m, P) = \begin{cases}
1, & \text{if } s_f \in F \text{ and } s_f' \in F \text{ (robust)} \\
0, & \text{otherwise}
\end{cases}
\]

Note: Incomplete solutions (where $s_f \notin F$) receive $\text{Score}_R = 0$ since robustness is only meaningful for valid solutions.

\paragraph{Syntactic Validity (Soundness) Score}
Measures command validity for platform $D$ over all commands in history $H_m$:
\[
\text{Score}_X(H_m) = \frac{|\{a_t \in H_m \mid a_t \in \mathcal{L}_D \cup \SigmaRead\}|}{|H_m|}
\]
Note: $\text{Score}_X < 1$ is compatible with a valid solution: failed commands during exploration are tolerated provided the agent 
recovers and achieves $s_f \in F$. $\text{Score}_X$ is rather a quality metric over the full trace $H_m$, but does not affect solution completeness.

\paragraph{Final Combined Score}
\[
\text{Score}_{\text{final}}(H_m, P) = w_C \cdot \text{Score}_C + w_R \cdot \text{Score}_R + w_X \cdot \text{Score}_X
\]
subject to: $w_C + w_R + w_X = 1$ and $w_i \geq 0$.
To facilitate clear quantitative comparison of agent capabilities across various tasks, the evaluation is distilled into a single numerical score. The associated weights ($w_C, w_R, w_X$) serve as tunable parameters that can be adjusted to reflect specific operational priorities, such as heavily emphasizing robustness ($w_R$) in mission-critical environments. For our baseline experiments, we apply uniform weighting ($w_i = 1/3$) to establish a balanced metric, subsequently averaging the results to yield a final aggregated score.

\subsection{Benchmark Controller ($\mathcal{M}_{\text{bench}}$)}
\label{sec:controller}
Orchestrating the entire process is the \textbf{Benchmark Controller}, a Mealy machine \cite{b17} that manages the lifecycle of the test. It loads the task, drives the infrastructure provisioning, hands off control to the SUT, and finally executes the evaluation logic.
The benchmark orchestration system is modeled as a Mealy machine:
\[
\mathcal{M}_{\text{bench}} = (B, \Gamma_{in}, \Gamma_{out}, \delta_B, \lambda_B, b_0)
\]
\begin{itemize}[leftmargin=1.2em]
 \item $B$: Controller states $B = B_{\mathsf{normal}} \cup \{b_{\mathsf{done}}, b_{\mathsf{error}}\}$. Normal states $B_{\mathsf{normal}} = \{b_{\mathsf{idle}}, b_{\mathsf{provision}}, b_{\mathsf{ready}}, b_{\mathsf{explore}}, b_{\mathsf{eval}}, b_{\mathsf{score}}\}$ represent the sequential phases: awaiting task, provisioning, ready, agent loop, evaluation, and scoring. Terminal states are $b_{\mathsf{done}}$ (success) and $b_{\mathsf{error}}$ (failure). 
 \item $\Gamma_{in}$: Input alphabet of external events driving transitions:
 \begin{itemize}[leftmargin=2em]
 \item $\mathsf{load}(T,\mathcal{A},P)$: Initialize with topology $T$, agent $\mathcal{A}$, and intent $P$.
 \item $\mathsf{\{infra, explore, eval, score\}\_\{done, error\}}$: Status events indicating completion or failure of the respective phases.
 \item $\mathsf{turn\_done}(a_t, obs_t)$: Agent turn completed with action $a_t$ and observation $obs_t$.
 \end{itemize}
 \item $\Gamma_{out}:$ Output alphabet $\Gamma_{out} = \SigmaInfra \cup \SigmaConfig \cup \SigmaRead \cup \{\text{eval}(P), \text{score}\}$. 
 \item $\lambda_B:$ The output function $\lambda_B: B \times \Gamma_{in} \to \text{OutputAction}$ produces command sequences atomically.
 \item $\delta_B: B \times \Gamma_{in} \to B$: State transition function
 \item $b_0 = b_{\mathsf{idle}} \in B$: Initial controller state
\end{itemize}

The Benchmark Controller orchestrates execution sequentially:
\begin{enumerate}[leftmargin=1.5em]
 \item \textbf{Load:} $\delta_B(b_{\mathsf{idle}}, \mathsf{load}) = b_{\mathsf{provision}}$ outputting $\lambda_B = \text{CommandSequence}(T)$.
 \item \textbf{Provision:} $T$ is used in $\delta_B(b_{\mathsf{provision}}, \mathsf{infra\_ready}) = b_{\mathsf{ready}}$ and initializes the SUT ($s_0 = \text{Initialize}(q_r)$) following successful infrastructure provisioning ($\deltaInfra^*(q_0, T) = q_r \in Q_{\text{accept}}$). Failures ($\mathsf{infra\_error}$) trigger $\delta_B(b_{\mathsf{provision}}, \mathsf{infra\_error}) = b_{\mathsf{error}}$, circumventing initialization.
 \item \textbf{Explore:} Operates the iterative interaction loop, transitioning $\delta_B(b_{\mathsf{explore}}, \mathsf{turn\_done}) = b_{\mathsf{explore}}$ for each agent turn. Termination ($\mathsf{STOP}$ or budget exhausted) issues $\mathsf{explore\_done}$, triggering $\delta_B(b_{\mathsf{explore}}, \mathsf{explore\_done}) = b_{\mathsf{eval}}$.
 \item \textbf{Evaluate \& Score:} Executes property evaluation ($\delta_B(b_{\mathsf{eval}}, \mathsf{eval\_done}) = b_{\mathsf{score}}$) and calculates metrics ($\delta_B(b_{\mathsf{score}}, \mathsf{score\_done}) = b_{\mathsf{done}}$).
\end{enumerate}

In this manner, the benchmark controller integrates the infrastructure and SUT layers into a unified execution flow while handling infrastructure errors gracefully.

\section{Benchmark Proofs}
To establish the benchmark as a rigorous scientific instrument, we prove three fundamental properties: determinism, correctness, and bounded execution. We begin with two supporting lemmas.

\begin{lemma}[Observation Sufficiency]
 \label{lem:observation_sufficiency}
 For each property $p \in P$, there exists a finite set of read operations $R_p \subseteq \SigmaRead$ such that $p(s)$ can be determined from $\{\text{obs}(s, r) \mid r \in R_p\}$.
 \end{lemma}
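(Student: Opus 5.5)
The plan is to obtain the lemma almost directly from the Perfect Observability assumption stated in the Evaluation Engine subsection, together with the finiteness of $\SigmaRead$ from the Finite State Space abstraction. The only real work is turning that informal assumption into an explicit finite set $R_p$ and a function that decides $p$ from the corresponding observations.

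\textbf{Step 1: extract a witnessing sequence.} Fix $p \in P$. Perfect Observability says $p$ can be checked by issuing a suitable sequence of read operations. I will read this as: there is a finite sequence $\langle r_1,\dots,r_k\rangle$ over $\SigmaRead$, and a decision procedure $g_p$, such that
\[ p(s) = g_p(\obs(s,r_1),\dots,\obs(s,r_k)) \]
for every $s \in S$. This is exactly the form used in the definition of $\eval(s,p)$.

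\textbf{Step 2: show the order of reads does not matter.} The definition of $\eval$ requires $\deltaSUT(s,r)=s$ for all $r \in \SigmaRead$. The execution semantics also gives $s_t = s_{t-1}$ whenever $a_t \in \SigmaRead$. So by induction on the length of the read sequence, using the extended transition function $\deltaStarSUT$ of Definition~\ref{def:extended_transition}, every read in the sequence is issued against the same state $s$. Hence each observation is $\obs(s,r_i)$, independent of its position. Set $R_p = \{r_1,\dots,r_k\}$, which is finite and contained in $\SigmaRead$. Repeated reads return the same value, so $g_p$ factors through the set $\{\obs(s,r) \mid r \in R_p\}$, indexed by $r$, which is the object named in the statement.

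\textbf{Step 3: a fallback if the assumption is taken only as non-invasiveness.} If one prefers not to posit an explicit sequence, I would take $R_p = \SigmaRead$ itself, which is finite by the Finite State Space abstraction. I would then argue as follows:

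\begin{itemize}
\item Properties in $P$ are required to depend only on observable behaviour, so $p$ is constant on the classes of the quotient $\approx_{\text{obs}}$.
\item A state's class is determined by its full observation tuple over $\SigmaRead$.
\item Therefore $p(s)$ is a function of $\{\obs(s,r) \mid r \in \SigmaRead\}$.
\end{itemize}

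The main obstacle is Step 3's premise: it relies on the modelling requirement that $p$ respects $\approx_{\text{obs}}$. This is an assumption of the framework, not something provable from the other definitions. I would state this dependence explicitly, so the lemma is presented as a consequence of Perfect Observability rather than an independent fact.
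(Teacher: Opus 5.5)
Your proposal is correct and follows the paper's route. The paper also derives the lemma directly from Perfect Observability: sufficiency holds because no property in $P$ depends on unobservable state, and finiteness holds because $R_p \subseteq \SigmaRead$ with $\SigmaRead$ finite, which is essentially your Step 3. Your added details (using non-invasiveness of reads to pass from a read sequence to an indexed set, and fixing an $s$-independent $R_p$ such as $\SigmaRead$ itself) make explicit what the paper leaves implicit, since its definition $R_p = \{r \in \SigmaRead \mid \obs(s,r) \text{ is required to evaluate } p\}$ contains a free $s$.
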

 
\begin{proof}
This follows directly from the Perfect Observability assumption: every property $p \in P$ is defined as a predicate over observable network state, meaning its truth value is fully determined by a finite set of read observations. Formally, since $p$ is a boolean-valued predicate and $\SigmaRead$ is finite, the set $R_p = \{r \in \SigmaRead \mid \obs(s,r) \text{ is required to evaluate } p\}$ is finite by construction. 
The sufficiency of $R_p$ is guaranteed by the Perfect Observability assumption, under which no property in $P$ depends on unobservable internal state.
\end{proof}

\subsection{Benchmark Determinism}
\begin{theorem}[Benchmark Infrastructure Determinism]
\label{thm:determinism}
The benchmark evaluation function $\mathcal{E}$ is a deterministic function of $(T, P, \hat{A})$. That is, identical solution sequences on identical tasks always produce identical scores:
\[ \forall (T, P, \hat{A}),\ \exists!\, r : \mathcal{E}(T, P, \hat{A}) = r \]
\end{theorem}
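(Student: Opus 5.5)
We will prove the claim by writing $\mathcal{E}$ as a composition of stages, each of which is a total function, and then invoke the fact that a composition of functions is a function. Concretely, we factor
\[ \mathcal{E}(T,P,\hat{A}) = \text{Score}\bigl(\eval(s_f,P),\, \eval(s_f',P),\, \hat{A}\bigr), \]
where
\[ s_0 = \text{Initialize}(\deltaInfra^*(q_0,T)), \qquad s_f = \deltaStarSUT(s_0,\hat{A}), \qquad s_f' = \deltaStarSUT(s_f,\hat{A}). \]
The agent $\mathcal{A}$ is deliberately absent from this factorization. The theorem quantifies over the solution $\hat{A}$ rather than over $\mathcal{A}$, so the stochasticity of the oracle never enters. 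We will also note that read commands leave the state unchanged ($\deltaSUT(s,r)=s$). Hence replaying the full trace $H_m$ or only its configuration subsequence $\hat{A}$ yields the same $s_f$, and it suffices to track $\hat{A}$.

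\textbf{Step 1 (provisioning).} Since $\deltaInfra: Q\times\SigmaInfra\to Q$ is a function, induction on $|T|$ shows $\deltaInfra^*(q_0,T)$ is uniquely determined. There are then two cases. If the result lies in $Q_{\text{error}}$, the controller's transition $\delta_B(b_{\mathsf{provision}},\mathsf{infra\_error})=b_{\mathsf{error}}$ fixes a unique outcome, and we assign it a fixed error score. Otherwise the result is some $q_r\in Q_{\text{accept}}$, and since $\text{Initialize}$ is total, $s_0$ is uniquely determined.

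\textbf{Step 2 (SUT execution).} We apply Definition~\ref{def:extended_transition} and induct on $|\hat{A}|$. The base case is $\deltaStarSUT(s_0,\epsilon)=s_0$. In the inductive step, each configuration command $a$ produces $\deltaSUT(\deltaSUT(s,a),\varepsilon)$, which is a single well-defined state. The same argument run from $s_f$ gives $s_f'$.

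\textbf{Step 3 (evaluation and scoring).} By Lemma~\ref{lem:observation_sufficiency}, each $\eval(s,p)$ is a function of the finitely many values $\obs(s,r)$ for $r\in R_p$. Since $\obs$ is a function of $s$, the vector $\eval(s_f,P)$ is determined, and so is membership of $s_f$ and $s_f'$ in $F$. It follows that $\text{Score}_C$ and $\text{Score}_R$ are determined. $\text{Score}_X$ depends only on membership of commands in the fixed sets $\mathcal{L}_D$ and $\SigmaRead$. With fixed weights $w_C,w_R,w_X$, $\text{Score}_{\text{final}}$ is therefore unique. Finally, the controller $\mathcal{M}_{\text{bench}}$ is a Mealy machine with functional $\delta_B$ and $\lambda_B$, so its event sequence is determined and it does not introduce any branching.

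\textbf{Main obstacle.} The delicate step is Step 2. There we must justify that the $\varepsilon$-transition resolving $s_{\text{pending}}$ is a single functional step, whereas a general $\varepsilon$-NFA could resolve the same pending state in more than one way. We plan to discharge this with the Bounded Convergence assumption: the pending state resolves to one stable state within $T_{\max}$ or to the error state. We will combine this with the finite abstraction (counter and timer abstraction together with the quotient $\approx_{\text{obs}}$), which collapses timing-dependent concrete outcomes into a single abstract state. The argument is that $\deltaSUT(\cdot,\varepsilon)$, understood as a function on the abstract state space, is well defined. The determinism guaranteed by the theorem is therefore determinism of the abstract (observable) state, and it is this stated dependency that the proof must make explicit.
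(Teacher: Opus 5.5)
Your skeleton matches the paper's own proof: provisioning, initialization, induction over $\hat{A}$ through $\deltaSUT$, then evaluation, assembled as a composition of functions. Your extra handling of the infrastructure-error branch and of $s_f'$ for $\text{Score}_R$ is fine. The step that actually fails is the $\text{Score}_X$ claim in Step~3. $\text{Score}_X$ is defined over the whole trace $H_m$, and read actions count in both its numerator and its denominator. Your reduction from $H_m$ to $\hat{A}$ is valid for $s_f$, since reads leave the state unchanged, but it discards exactly the information $\text{Score}_X$ needs. Concretely, take $\hat{A}=\langle a\rangle$ with $a\in\SigmaConfig\setminus\mathcal{L}_D$. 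The traces $\langle a\rangle$ and $\langle r,a\rangle$ with $r\in\SigmaRead$ share $\hat{A}$, $s_f$ and $s_f'$, yet give $\text{Score}_X=0$ and $\text{Score}_X=1/2$. Hence for any $w_X>0$ (the paper uses $1/3$) they give different final scores, so $\text{Score}_{\text{final}}$ is not a function of $(T,P,\hat{A})$. The paper's proof does not confront this either: it only argues that ``the scoring depends on observations $\obs(s_f,r)$,'' which covers $\text{Score}_C$ (and, with your $s_f'$, $\text{Score}_R$) but not $\text{Score}_X$. To make Step~3 true, you have two options. You can let $\mathcal{E}$ take $(T,P,H_m)$ as input; your argument then goes through verbatim, since $H_m$ determines $\hat{A}$. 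Or you can restrict $\mathcal{E}$ to the state-based outputs (validity verdict, $\text{Score}_C$, $\text{Score}_R$).

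Second, the step you flag as the main obstacle is not treated as one in the paper, and your planned resolution would not work. The paper types $\deltaSUT:S\times\Sigma\to S$ as a function that ``implicitly encompasses'' the $\varepsilon$-resolution of $s_{\text{pending}}$. It invokes Bounded Convergence only as the modeling justification for that atomic abstraction, and its inductive step simply uses that $\deltaSUT$ is a function. Bounded Convergence cannot give you single-valuedness. It constrains \emph{when} the pending state resolves, not \emph{to which} state, and the $T_{\text{max}}$ cutoff itself creates a race: convergence just before versus just after the deadline yields a stable versus an error state. The abstraction cannot give it either. The relation $\approx_{\text{obs}}$ only merges states whose read outputs coincide, so timing-dependent outcomes that differ observably (e.g.\ a different OSPF DR election) stay distinct. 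Moreover, defining $\deltaSUT(\cdot,\varepsilon)$ on the quotient would require $\approx_{\text{obs}}$ to be a congruence for $\deltaSUT$, which the framework never establishes. State determinism of the $\varepsilon$-step as a hypothesis of the model, as the paper implicitly does, rather than trying to derive it.
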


\begin{proof}
We define the evaluation process $\mathcal{E}$ as a composition of functions, with $\hat{A} = \langle a_{i_1}, a_{i_2}, \ldots, a_{i_k} \rangle$ fixed.
\begin{enumerate}
 \item \textbf{Provisioning:} Let $\phi_{\text{infra}}: \SigmaInfra^* \to Q$ be the transition function of $\mathcal{M}_{\text{infra}}$. Since $\mathcal{M}_{\text{infra}}$ is a deterministic FSM, $\phi_{\text{infra}}(T) = q_r$ is unique.
 \item \textbf{Initialization:} The function $\text{Initialize}(q_r) = s_0$ is a function, hence $s_0$ is unique.
 \item \textbf{Configuration:} Given the fixed solution $\hat{A} = \langle a_{i_1}, \ldots, a_{i_k} \rangle$, we prove by induction that each state $s_j$ reached after applying $a_{i_j}$ is uniquely determined:

 \textit{Base case:} $s_0$ is unique (established in step 2).

 \textit{Inductive step:} Assume $s_{j-1}$ is unique. The command $a_{i_j} \in \hat{A}$ is fixed by the sequence $\hat{A}$. Then the next state $s_j = \deltaSUT(\deltaSUT(s_{j-1}, a_{i_j}), \varepsilon)$ is uniquely determined by the deterministic transition function $\deltaSUT$ of $\mathcal{M}_{\text{SUT}}$. The resulting observation $\text{obs}(s_j, a_{i_j})$ is unique since $\text{obs}$ is deterministic.

 By induction, the final state $s_f$ is uniquely determined by $\hat{A}$, $s_0$, and the deterministic transitions of $\mathcal{M}_{\text{SUT}}$.
 \item \textbf{Evaluation:} The scoring depends on observations $\obs(s_f, r)$ for $r \in \SigmaRead$. Since $\obs$ is deterministic and $s_f$ is unique, the scoring is deterministic.
\end{enumerate}
Since $\mathcal{E}$ is a composition of deterministic functions ($\phi_{\text{infra}}$, $\text{Initialize}$, $\deltaStarSUT$, $\text{obs}$, and $\text{Score}$) applied to fixed inputs $(T, P, \hat{A})$, the benchmark result is deterministic.
\end{proof}

\subsection{Validity and Correctness}
The core function of the benchmark is to distinguish between valid and invalid solutions. We formally define validity based on syntactic correctness and semantic completeness.

\begin{definition}[Valid Solution]
\label{def:valid_solution}
Given a benchmark $(\mathcal{M}_{\text{SUT}}, P, D)$ with initial 
state $s_0$, the solution $\hat{A}$ produced by agent $\mathcal{A}$ 
is \textbf{valid} if and only if:
\[ s_f \in F \]
That is, the final state satisfies all properties in $P$ and 
$\text{converged}_{s_f} = \mathsf{true}$. 
\end{definition}

\begin{theorem}[Robust Solution]
 \label{thm:robust_solution}
 Let $\hat{A}$ be the solution produced by a valid run of agent $\mathcal{A}$ (as defined in Section~\ref{sec:framework}). The solution $\hat{A}$ is \textbf{robust} if and only if:
 \[ \forall s \in F: \deltaStarSUT(s, \hat{A}) \in F \]
 That is, re-executing the agent's final sequence of configuration commands $\hat{A}$ from any accepting state produces another accepting state.
 
 \textbf{Practical Robustness:} In our implementation, we test a weaker property: 
 \[ \deltaStarSUT(s_f, \hat{A}) \in F \] where $s_f$ is the single final state from the initial execution. This single-point idempotency check is sufficient for practical evaluation.
\end{theorem}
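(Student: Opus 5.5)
Since the statement is largely definitional, the plan is to show three things. First, the condition $\forall s \in F: \deltaStarSUT(s, \hat{A}) \in F$ is well-posed. Second, it captures exactly the informal notion of idempotency: re-applying the agent's configuration from any already-correct state keeps the network correct. Third, the \emph{practical} single-point check is a sound consequence of it that can be computed reproducibly.

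\textbf{Well-posedness.} I would start by making $\deltaStarSUT(s,\hat{A})$ precise for $\hat{A} \in \SigmaConfig^*$. Following the execution semantics of Section~\ref{sec:framework}, each configuration command is followed by the autonomous convergence step. So I would introduce the composite step $\hat{\delta}(s,a) = \deltaSUT(\deltaSUT(s,a),\varepsilon)$ and read $\deltaStarSUT$ in Definition~\ref{def:extended_transition} as the lift of $\hat{\delta}$. Induction on the length of $\hat{A}$, exactly as in step 3 of the proof of Theorem~\ref{thm:determinism}, then shows two things:
\begin{itemize}
\item $\deltaStarSUT(s,\hat{A})$ is a unique state in $S_{\text{stable}} \cup S_{\text{error}}$;
\item no pending state is ever the result, because Bounded Convergence forces every $\varepsilon$-step to resolve within $T_{\text{max}}$.
\end{itemize}
Membership in $F$ is decidable by Lemma~\ref{lem:observation_sufficiency}, since $F$ is defined through the predicates in $P$ and $\text{converged}_s$. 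Hence the robustness condition is a well-defined proposition about $\hat{A}$.

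\textbf{Equivalence with idempotency.} Here I would argue both directions directly from the definitions.
\begin{itemize}
\item If re-applying $\hat{A}$ from every accepting state keeps the system in $F$, this is the displayed condition verbatim.
\item Conversely, the condition quantifies over all of $F$. It therefore covers every drift-corrected or mid-exploration state that already satisfies $P$, which is the scenario described in the idempotency discussion of Section~II.
\end{itemize}
This step is essentially unfolding definitions. I would also note that validity (Definition~\ref{def:valid_solution}) gives $s_f \in F$, so $F \neq \emptyset$ for a valid run and the quantifier is not vacuous.

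\textbf{Practical robustness.} For the weaker check, instantiate the universal statement at $s = s_f$, which lies in $F$ because the run is valid. This yields $\deltaStarSUT(s_f,\hat{A}) \in F$, i.e.\ $s_f' \in F$, which is exactly the condition under which $\text{Score}_R = 1$. So full robustness implies the practical check; equivalently, failing the check certifies non-robustness. Theorem~\ref{thm:determinism} makes $s_f$ and $s_f'$ unique functions of $(T,P,\hat{A})$, so the check is reproducible.

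\textbf{Main obstacle.} The hardest step is justifying the claim that the single-point check is ``sufficient,'' since the converse implication (weak $\Rightarrow$ strong) fails in general. I would not claim it. Instead, I would first try to state sufficiency as \emph{soundness}: the check never certifies a solution that is non-idempotent at its own endpoint. I would then exhibit that it is exact whenever $F$ restricted to the fixed topology $q_r$ is a singleton modulo $\approx_{\text{obs}}$. In that case every $s \in F$ is observationally equal to $s_f$, and the two conditions coincide under the observable-equivalence quotient of Section~\ref{sec:framework}.
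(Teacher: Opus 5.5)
Your proposal is correct and follows the paper's argument. The paper's proof of the biconditional is likewise a pure unfolding of the definition of robustness, and its closing note that the single-point test $\deltaStarSUT(s_f,\hat{A}) \in F$ is necessary but not sufficient is exactly your instantiation at $s = s_f \in F$; your well-posedness check is extra rigor the paper does not supply.

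One caution on your optional exactness remark: it holds only under the paper's convention that the benchmark state space \emph{is} the set of observation tuples, so that $\deltaSUT$ is well defined on $\approx_{\text{obs}}$-classes by fiat. If $\approx_{\text{obs}}$ is merely a relation on concrete states, you would additionally need it to be a congruence for $\deltaSUT$ and $F$ to be closed under it.
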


\begin{proof}
$(\Rightarrow)$: Suppose $\forall s \in F: \deltaStarSUT(s, \hat{A}) \in F$. Taking any $s \in F$, re-executing $\hat{A}$ yields $s' = \deltaStarSUT(s, \hat{A}) \in F$, so all properties in $P$ remain satisfied. Since this holds universally over $F$, the solution is robust.

$(\Leftarrow)$: Suppose $\hat{A}$ is robust, i.e., $\forall s \in F: \deltaStarSUT(s, \hat{A}) \in F$. This holds by the definition of robustness.

Therefore, $\hat{A}$ is robust if and only if $\forall s \in F: \deltaStarSUT(s, \hat{A}) \in F$. 

\noindent\textit{Note: The practical test $\deltaStarSUT(s_f, \hat{A}) \in F$ 
is a necessary but not sufficient condition for the universal claim.}
\end{proof}

\begin{theorem}[Benchmark Correctness]
 The benchmark $\mathcal{M}_{\text{bench}}$ correctly evaluates 
 configuration solutions. Specifically, given a benchmark instance, 
 agent $\mathcal{A}$, and the solution $\hat{A}$ it produces:
 \[ \mathcal{M}_{\text{bench}} \text{ declares solution } \hat{A} 
 \text{ valid} \iff s_f \in F \]
\end{theorem}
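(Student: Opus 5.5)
The natural route is to reduce the biconditional to the definition of "declares valid", anchoring it to the evaluation phase of $\mathcal{M}_{\text{bench}}$. First fix what "declares valid" means operationally. After exploration terminates, the controller is in state $b_{\mathsf{eval}}$, computes $\eval(s_f,P)$, and moves to $b_{\mathsf{score}}$. We read "declares $\hat{A}$ valid" as the event that $\eval(s_f,P)=\langle\mathsf{true},\dots,\mathsf{true}\rangle$ and the convergence flag read from $s_f$ is $\mathsf{true}$, or equivalently $\text{Score}_C=1$. We also treat the infrastructure-error branch $b_{\mathsf{error}}$ separately. There, no $s_0$ is initialized, no $s_f$ exists, and no solution is declared valid, so the statement is understood to range over runs reaching $b_{\mathsf{eval}}$.

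The core step is a pointwise lemma: for every $p\in P$ and every $s\in S$, $\eval(s,p)=p(s)$. To prove it, invoke Lemma~\ref{lem:observation_sufficiency}, which gives a finite $R_p\subseteq\SigmaRead$ whose observations $\{\obs(s,r)\mid r\in R_p\}$ determine $p(s)$. Since $\eval(s,p)$ is defined as the condition evaluated on exactly these observations, it coincides with $p(s)$. Two further facts complete the lemma:
\begin{itemize}
\item Non-invasiveness, $\deltaSUT(s,r)=s$ for $r\in\SigmaRead$, ensures that issuing the reads in any order leaves the state at $s_f$, so every read sees the same state and no earlier read corrupts a later property.
\item Determinism of $\obs$ (used already in Theorem~\ref{thm:determinism}) makes the result well defined.
\end{itemize}
An induction on the length of the read sequence formalizes the claim that the state stays $s_f$ throughout evaluation.

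With the lemma in hand, both directions follow. For $(\Leftarrow)$, if $s_f\in F$, then every $p(s_f)=\mathsf{true}$ and $\text{converged}_{s_f}=\mathsf{true}$. The lemma gives an all-true evaluation tuple, so the controller declares $\hat{A}$ valid. For $(\Rightarrow)$, an all-true tuple together with the lemma yields $p(s_f)=\mathsf{true}$ for all $p$. Adding the convergence check gives $s_f\in F$ by the definition of $F$, which is the same condition as Definition~\ref{def:valid_solution}.

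The main obstacle is the convergence conjunct. $F$ requires $\text{converged}_{s_f}=\mathsf{true}$, but $P$ need not contain a convergence predicate, so an all-true $\eval(s_f,P)$ alone does not force membership in $F$. I would first try to close this with Bounded Convergence: every configuration step is $\deltaSUT(\deltaSUT(s,a),\varepsilon)$, which lands in $S_{\text{stable}}\cup S_{\text{error}}$ and never in $S_{\text{pending}}$, and $s_f\in S_{\text{stable}}\cup S_{\text{error}}$. It remains to argue two things. First, stable states are converged. Second, error states fail some property, or the controller separately rejects them, so that evaluation never sees an unconverged state with all properties true. If this argument does not go through cleanly, the fallback is to make the convergence check an explicit part of the controller's validity declaration, read observably via some $r\in\SigmaRead$ under Perfect Observability, and to note this as an assumption.
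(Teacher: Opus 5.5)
Your route is the same as the paper's: Lemma~\ref{lem:observation_sufficiency} together with Perfect Observability gives $\eval(s_f,p)=p(s_f)$ for each $p\in P$, both directions follow from that, and your remarks on non-invasiveness, determinism and the $b_{\mathsf{error}}$ branch only make explicit what the paper leaves implicit. The convergence conjunct you flag is a real hole that the paper's soundness direction steps over: it concludes $s_f\in F$ from $p_i(s_f)=\mathsf{true}$ alone, just as it asserts $\text{Score}_C=1\iff s_f\in F$ without argument. So commit to your fallback of an explicit, observably checked convergence condition in the declaration, stated as an assumption. Your Bounded Convergence route stalls on $S_{\text{error}}$, since nothing in the paper forces an error state to falsify some $p\in P$, and for the same reason you should drop the phrase ``or equivalently $\text{Score}_C=1$''.
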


\begin{proof}
 Let $\hat{A}$ be the configuration sequence extracted from 
 $\mathcal{A}$'s trace $H_m$, and let $s_f$ be the final state 
 at termination turn $m \leq K$.

 \textbf{($\Rightarrow$) Soundness:} Suppose $\mathcal{M}_{\text{bench}}$ 
 declares $\hat{A}$ valid. The Evaluation Engine computes 
 $\eval(s_f, p_i)$ by issuing the read operations $R_{p_i}$ 
 guaranteed finite and sufficient by 
 Lemma~\ref{lem:observation_sufficiency}, with accuracy guaranteed 
 by Perfect Observability. Therefore $\eval(s_f, p_i) = \mathsf{true} 
 \implies p_i(s_f) = \mathsf{true}$ for each $p_i \in P$, giving 
 $s_f \in F$.

 \textbf{($\Leftarrow$) Completeness:} Suppose $s_f \in F$, meaning 
 $\forall p \in P: p(s_f) = \mathsf{true}$ and 
 $\text{converged}_{s_f} = \mathsf{true}$. By Perfect Observability, 
 the read observations in $R_p$ reflect this, so 
 $\eval(s_f, P) = \langle \mathsf{true}, \dots, \mathsf{true} \rangle$, 
 causing $\mathcal{M}_{\text{bench}}$ to declare $\hat{A}$ valid.
\end{proof}

\subsection{Benchmark Bounded Execution}
Finally, we must prove that the benchmark will not run indefinitely, ensuring the evaluation procedure always completes.

\begin{theorem}[Benchmark Bounded Execution]
For any finite input tuple $(T, \mathcal{A}, P)$, the benchmark execution halts within total time:
\[ \mathcal{T}_{\text{exec}} \leq \mathcal{T}_{\text{total}} + |P| \cdot \mathcal{T}_{\text{obs}} \]
\end{theorem}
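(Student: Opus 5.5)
The plan is to decompose the execution of $\mathcal{M}_{\text{bench}}$ along its controller phases and bound each phase separately. Along the sequential path $b_{\mathsf{idle}} \to b_{\mathsf{provision}} \to b_{\mathsf{ready}} \to b_{\mathsf{explore}} \to b_{\mathsf{eval}} \to b_{\mathsf{score}} \to b_{\mathsf{done}}$ there is no backward transition except the self-loop $\delta_B(b_{\mathsf{explore}}, \mathsf{turn\_done}) = b_{\mathsf{explore}}$, and every other transition either advances to the next phase or to the absorbing state $b_{\mathsf{error}}$. Hence it suffices to show that each phase terminates in bounded time. Since the statement does not fix $\mathcal{T}_{\text{total}}$ explicitly, I would define it as
\[ \mathcal{T}_{\text{total}} = \mathcal{T}_{\text{prov}} + K\cdot(\mathcal{T}_{\text{agent}} + T_{\text{max}}) + \mathcal{T}_{\text{score}}, \]
where $\mathcal{T}_{\text{prov}}$ bounds provisioning, $\mathcal{T}_{\text{agent}}$ is a per-turn wall-clock timeout on the oracle $\mathcal{A}$ (the time constraint mentioned in the introduction), and $\mathcal{T}_{\text{score}}$ is the constant cost of computing the scores. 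The term $\mathcal{T}_{\text{obs}}$ would then be taken as an upper bound on the time to evaluate a single property $p$ via its read set $R_p$.

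\textbf{Provisioning.} $T\in\SigmaInfra^+$ is a finite sequence of length $k$, and each application of $\deltaInfra$ is a single step of a finite automaton. Therefore $\deltaInfra^*(q_0,T)$ is reached after exactly $k$ steps, giving a bound $\mathcal{T}_{\text{prov}}$ proportional to $k$. If the result lands in $Q_{\text{error}}$, the controller moves to $b_{\mathsf{error}}$ and halts immediately, so this case only shortens the run.

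\textbf{Exploration.} The execution loop terminates when $a_t=\mathsf{STOP}$ or when $t>K$, so there are at most $K$ turns. Each turn costs one agent call, bounded by $\mathcal{T}_{\text{agent}}$, plus either a read or a configuration step. A read leaves the state fixed, so the loop continues with $s_t=s_{t-1}$ at no extra convergence cost. A configuration step goes to $s_{\text{pending}}$ and is resolved by the $\varepsilon$-transition, which by Bounded Convergence completes within $T_{\text{max}}$ or else moves the system to an error state. Either way, each turn finishes within $\mathcal{T}_{\text{agent}}+T_{\text{max}}$.

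\textbf{Evaluation.} By Lemma~\ref{lem:observation_sufficiency}, each $p\in P$ needs only the finite read set $R_p$, and every read is non-invasive, so evaluating $p$ costs at most $\mathcal{T}_{\text{obs}}$. Summing over the $|P|$ properties gives $|P|\cdot\mathcal{T}_{\text{obs}}$. Scoring is arithmetic over $H_m$ with $|H_m|\le K$. I would fold the robustness re-execution of $\hat{A}$, which has $|\hat{A}|\le K$ steps each bounded by $T_{\text{max}}$, into $\mathcal{T}_{\text{score}}$. The final inequality then follows by adding the phase bounds.

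The main obstacle is the black-box oracle $\mathcal{A}$. Nothing in the formal model bounds the time $\mathcal{A}$ takes to produce $a_t$, so the bound only holds if a per-turn timeout is made explicit, with expiry treated as $\mathsf{STOP}$ or as a $\mathsf{timeout}$ event. I would state this as part of the "time and turn constraints" in the proof. I would also note that $\varepsilon$-transitions cannot chain unboundedly, because each one resolves a pending state into $S_{\text{stable}}\cup S_{\text{error}}$, and no $\varepsilon$-transition leaves a stable state.
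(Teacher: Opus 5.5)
Your proof is correct under the assumptions you make explicit, but it bounds the exploration phase by a different mechanism than the paper. The paper reads $\mathcal{T}_{\text{total}}$ as a controller-enforced global wall-clock timeout (cf.\ the 1800\,s per-run limit in the experiments). The controller leaves $b_{\mathsf{explore}}$ on $\mathsf{STOP}$, on $t>K$, or on expiry of $\mathcal{T}_{\text{total}}$, so exploration is bounded by $\mathcal{T}_{\text{total}}$ outright; $K$ and $T_{\text{max}}$ play no role in the inequality. The paper then adds $|P|\cdot\mathcal{T}_{\text{obs}}$ for evaluation, and it considers only these two phases. You instead assemble the bound from $K$, a per-turn agent timeout and Bounded Convergence, and you \emph{define} $\mathcal{T}_{\text{total}}$ as that sum plus provisioning and scoring. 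The paper's route is shorter and handles the unbounded black-box oracle with a single preemptive timer, which is what the implementation actually enforces. Your route gives an explicit bound in primitive parameters and is more complete. The paper silently omits provisioning, scoring and the robustness re-execution of $\hat{A}$. Your per-property reading of $\mathcal{T}_{\text{obs}}$ also makes the $|P|\cdot\mathcal{T}_{\text{obs}}$ term exact, whereas the paper takes $\mathcal{T}_{\text{obs}}$ per read operation and so would strictly need $\sum_{p\in P}|R_p|\cdot\mathcal{T}_{\text{obs}}$. The cost of your route is that $\mathcal{T}_{\text{total}}$ becomes a derived quantity rather than the configured timeout. If it is read as a fixed parameter, your argument does not place exploration within it, since $K(\mathcal{T}_{\text{agent}}+T_{\text{max}})$ may exceed it. Two small items still need absorbing in your accounting. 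A read turn costs the execution time of the read itself, so your per-turn bound $\mathcal{T}_{\text{agent}}+T_{\text{max}}$ covers it only if reads finish within $T_{\text{max}}$. And the robustness check requires a second evaluation of $P$ on $s_f'$, which also belongs in $\mathcal{T}_{\text{score}}$.
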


\begin{proof}
The execution decomposes into two phases with explicit upper bounds:

\textbf{Exploration phase:} The controller enforces both a turn budget $K$ and a global timeout $\mathcal{T}_{\text{total}}$. Each turn involves at most one configuration transition (bounded by $\mathcal{T}_{\text{max}}$) or one non-invasive read. 
The controller transitions from $b_{\mathsf{explore}}$ to $b_{\mathsf{eval}}$ upon either $\mathsf{STOP}$, turn budget exhaustion ($t > K$), or $\mathcal{T}_{\text{total}}$ expiry. Hence the exploration phase is bounded by $\mathcal{T}_{\text{total}}$.

\textbf{Evaluation phase:} The evaluation engine checks each of the $|P| = m$ properties sequentially. Each property check requires a finite set $R_p$ of read operations (Lemma~\ref{lem:observation_sufficiency}), each taking at most $\mathcal{T}_{\text{obs}}$ time. Hence evaluation time is bounded by $m \cdot \mathcal{T}_{\text{obs}}$.

Composing both bounds:
\[ \mathcal{T}_{\text{exec}} \leq \mathcal{T}_{\text{total}} + |P| \cdot \mathcal{T}_{\text{obs}} < \infty \] since $\mathcal{T}_{\text{total}}$, $|P|$, and $\mathcal{T}_{\text{obs}}$ are all finite by assumption. Thus the procedure always halts.
\end{proof}

\section{Practical Benchmark Limitations}

\begin{itemize}
 \item \textbf{Imperfect Observability:} While universal diagnostic commands (e.g., \texttt{ping}, \texttt{show}) provide the means to observe network state, parsing unstructured CLI outputs on closed-source platforms introduces practical fragility (e.g., parsing variations in vendor OS versions or handling truncated tables) without violating theoretical observability.
 \item \textbf{Exploration Timing:} Multi-turn convergence delays introduce challenges. Aggressive timeouts risk premature failure declarations (false negatives), though agents can mitigate this by issuing explicit read commands to verify convergence.
\end{itemize}

\section{Experimental Setup}

\subsection{Development}
We implemented the formal framework as a Python benchmark with a three-layer FSM architecture using Containerlab \cite{b15} for topology provisioning, Docker \cite{b21} for command execution, and FRRouting \cite{b16} as the network platform. As illustrated in Figure \ref{fig:implementation}, topology specifications generate Containerlab YAML files that instantiate containerized nodes and links; configuration commands and property observations interact with the running containers through Docker. 
The benchmark provides an MCP server with tools that afford agents the ability to interact with the network platform in any manner they choose, whether diagnostic, constructive, or destructive.

\begin{figure}[htbp]
 \centering
 \includegraphics[width=1\columnwidth]{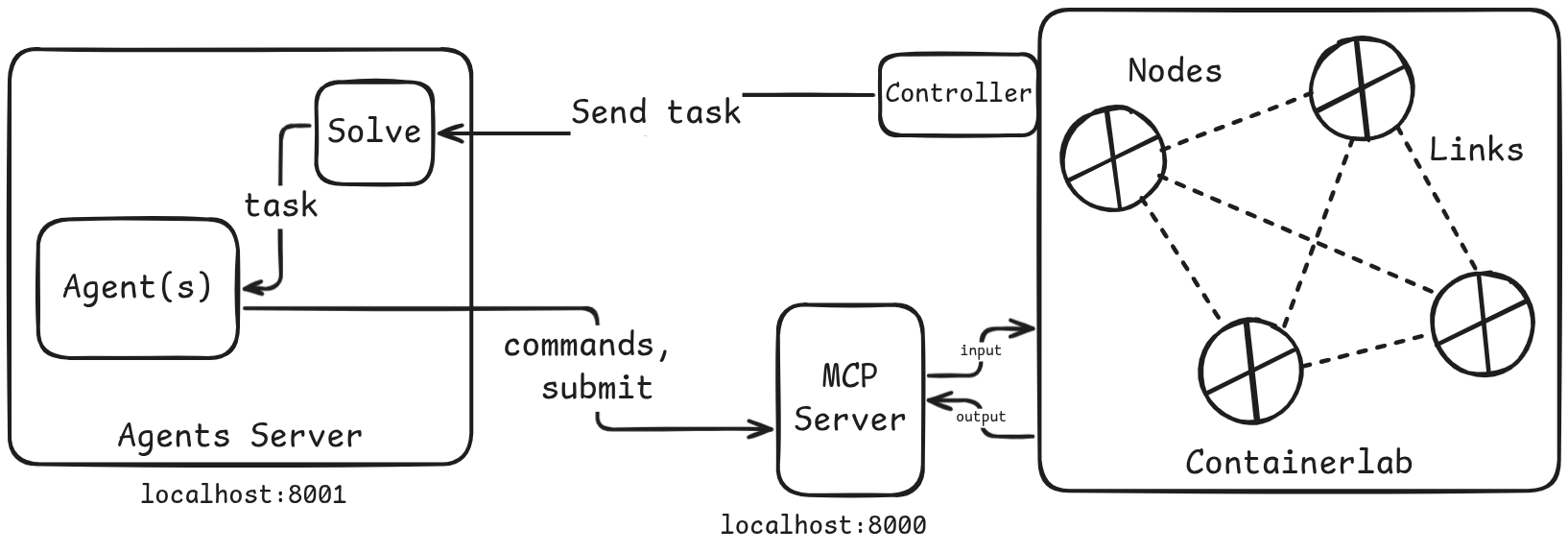}
 \caption{Implementation architecture: Topology provisioning via Containerlab and state interaction via Docker.}
 \label{fig:implementation}
\end{figure}

\subsection{Task Generation}
Our benchmark employs a highly curated \textbf{5-task suite} of configurations and fault troubleshooting, systematically mapping protocol complexity (RIP $<$ OSPF $<$ BGP) onto three industry-recognized certification difficulty tiers:

\begin{itemize}
 \item \textbf{CCNA-Level (Basic):} Foundational RIPv2 configuration across a 4-router network with reachability verification \cite{b25}.
 \item \textbf{CCNP-Level (Intermediate):} Includes multi-area OSPF deployment across 5 routers (focusing on hierarchical assignments and ABR route summarization) and an OSPF adjacency troubleshooting scenario \cite{b26}.
 \item \textbf{CCIE-Level (Expert):} Includes an enterprise dual-homed BGP configuration with redundant ISPs, and a rigorous BGP route filtering troubleshooting task involving inbound/outbound traffic engineering \cite{b26}.
\end{itemize}

\subsection{Experimental Configuration}
\textbf{Experimental Purpose.} The goal of the experimental validation is twofold: (1) Does the FSM-based multi-turn architecture enforce determinism and bounded convergence in practice? (2) Does dynamic, multi-turn evaluation expose meltdowns and coherence issues invisible to static benchmarks? 

We tested four state-of-the-art LLMs within our benchmark controller: OpenAI GPT-5 \cite{b22}, Meta Llama-3.3 \cite{b28}, Qwen3-Coder-30b \cite{b29}, and Qwen3-Coder-Next \cite{b29}. The evaluation executes using a single agent instantiation configuration utilizing a 120,000 token context window limit.

\textbf{Multi-Turn Execution Bounds.} Each agent engages the scenario with a maximum exploration budget of $K=100$ turns and a per-run timeout of $T_{\text{run}}=1800$s (30 minutes).

\textbf{Experimental Protocol.} Since the benchmark infrastructure is deterministic (Theorem~\ref{thm:determinism}), all inter-run variance originates from the agents' stochasticity. To characterize this distribution, we apply a \textbf{25-repetition design} per model per task (300 aggregate runs, 75 per model). The structured trace logs captured timestamped read/write behavior, properties, token efficiency computations, and temporal data to enable downstream behavioral analysis. 

\section{Results \& Discussion}

\subsection{Model Performance}
Table~\ref{tab:overall_performance} presents aggregated performance across 300 evaluation runs (75 per model). Because the benchmark infrastructure is deterministic (Theorem~\ref{thm:determinism}), all observed score variance is attributable to agent stochasticity, not benchmark noise.

\begin{table}[htbp]
\caption{Overall Model Performance (N=75 per model)}
\begin{center}
\resizebox{\columnwidth}{!}{%
\begin{tabular}{|l|c|c|c|c|c|}
\hline
\textbf{Model} & \textbf{Score $\mu{\pm}\sigma$} & \textbf{Succ.} & \textbf{Melt.} & \textbf{Turns} & \textbf{Tok.(K)} \\
\hline
GPT-5            & $0.508{\pm}0.290$ & 24.0\% & 38.7\% & 12.4 & 503.0 \\
Qwen3-Coder-30b  & $0.310{\pm}0.302$ & 12.0\% & 29.3\% & 12.8 & 400.2 \\
Llama-3.3        & $0.231{\pm}0.262$ &  6.7\% & 21.3\% &  3.6 &  30.1 \\
Qwen3-Coder-Next & $0.221{\pm}0.261$ &  5.3\% & 24.0\% &  6.3 & 167.6 \\
\hline
\end{tabular}}
\label{tab:overall_performance}
\end{center}
\end{table}

GPT-5 achieves the highest mean score ($\mu=0.508$) and success rate (24.0\%), statistically significantly ahead of all three competitors (Mann-Whitney U, $p<0.001$ for all pairs). Qwen3-Coder-30b ranks second ($\mu=0.310$), with a marginally significant advantage over Qwen3-Coder-Next ($p=0.046$). Critically, Llama-3.3 ($\mu=0.231$) and Qwen3-Coder-Next ($\mu=0.221$) are \emph{statistically indistinguishable} ($U=2851.5$, $p=0.881$), as are Llama-3.3 and Qwen3-Coder-30b ($p=0.057$). No model approaches reliable task completion, confirming that all evaluated agents remain well below the capability threshold required for production network automation.

This validates our formal framework (RQ1): the benchmark provides a high-powered, deterministic lens that separates stochastic agent performance at scale.

\subsection{Task Difficulty and Agent Convergence (RQ2)}
We hypothesized that model success would decrease monotonically with task complexity (CCNA $\to$ CCNP $\to$ CCIE). Our dynamic evaluation reveals a more nuanced but equally concerning picture.

\begin{figure}[htbp]
\centering
\includegraphics[width=\columnwidth]{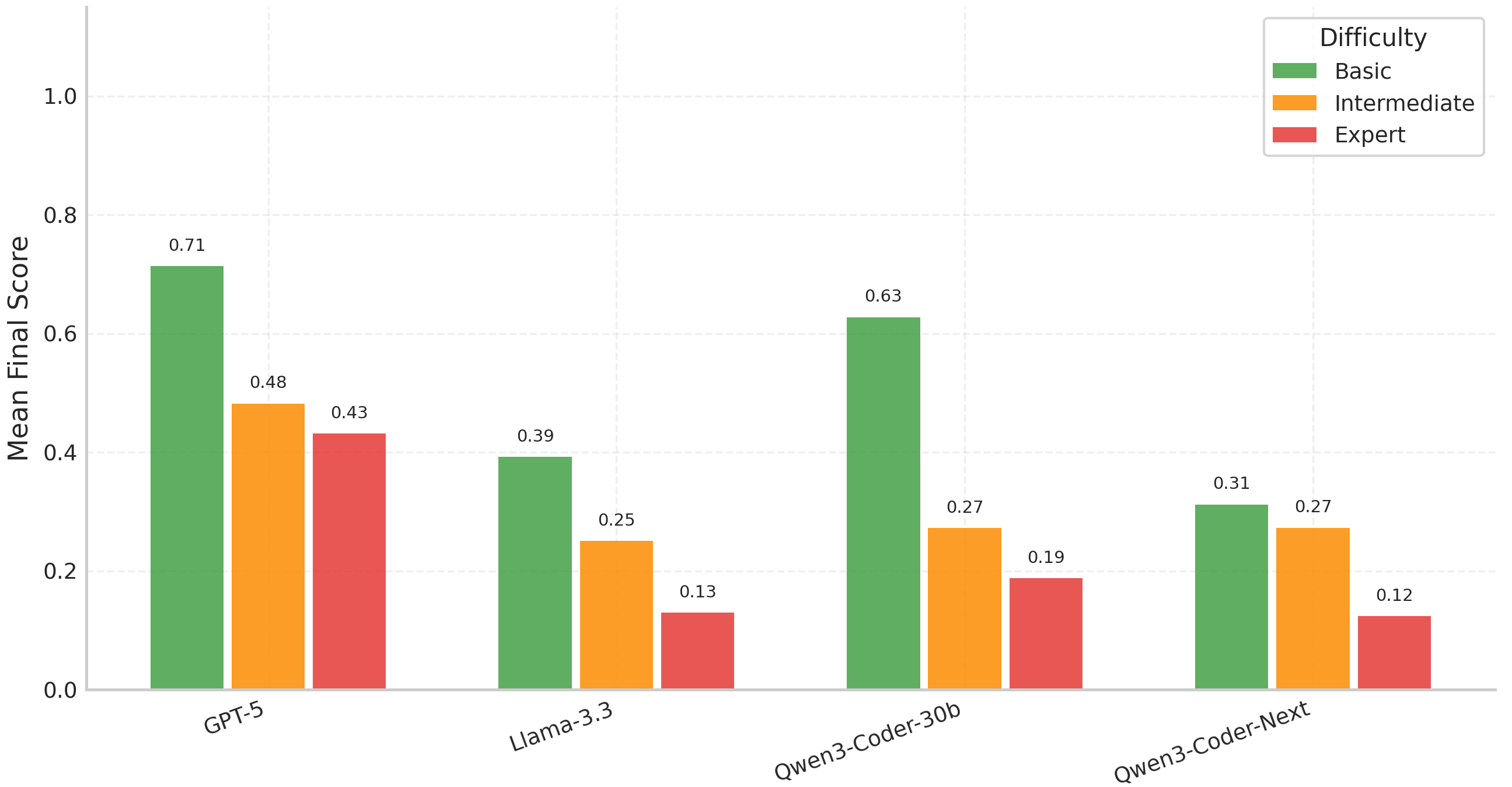}
\caption{Model performance across task difficulty levels. All models degrade across tiers; OSPF (Intermediate) creates a near-universal success collapse.}
\label{fig:difficulty}
\end{figure}

As shown in Figure~\ref{fig:difficulty}, GPT-5 achieves the strongest Basic (CCNA) performance (60\% success, $\mu=0.713$), yet already falls to 13.3\% success at Intermediate (CCNP OSPF) and 16.7\% at Expert (CCIE BGP). The three other models achieve near-zero success rates ($\leq$3.3\%) on all tiers beyond Basic. Notably, no model achieves universal success (i.e., 100\%) even on the simplest task, confirming that even elementary routing configurations pose a genuine challenge at scale.

Execution traces reveal this stems from structural differences: unlike BGP's flat configuration, OSPF imposes deep sequential topological dependencies (addressing $\to$ areas $\to$ adjacencies) \cite{muni2026efficient}. In OSPF, early errors corrupt dependent checks, trapping agents in destructive spirals that exhaust the 1800s time limit. This degradation confirms our dynamic framework's utility (RQ2) by exposing systemic reasoning limitations invisible to static benchmarks.
\subsection{Multi-Turn Behavioral Analysis (RQ3)}

\subsubsection{Coherence Collapse}
Coherence per turn is a cumulative score of constructive commands, penalized by each destructive command issued in that window, capturing how well the agent's actions preserve previously established state over time. This matters because an agent may appear to make progress on individual properties while silently eroding others.

\begin{figure}[htbp]
\centering
\includegraphics[width=\columnwidth]{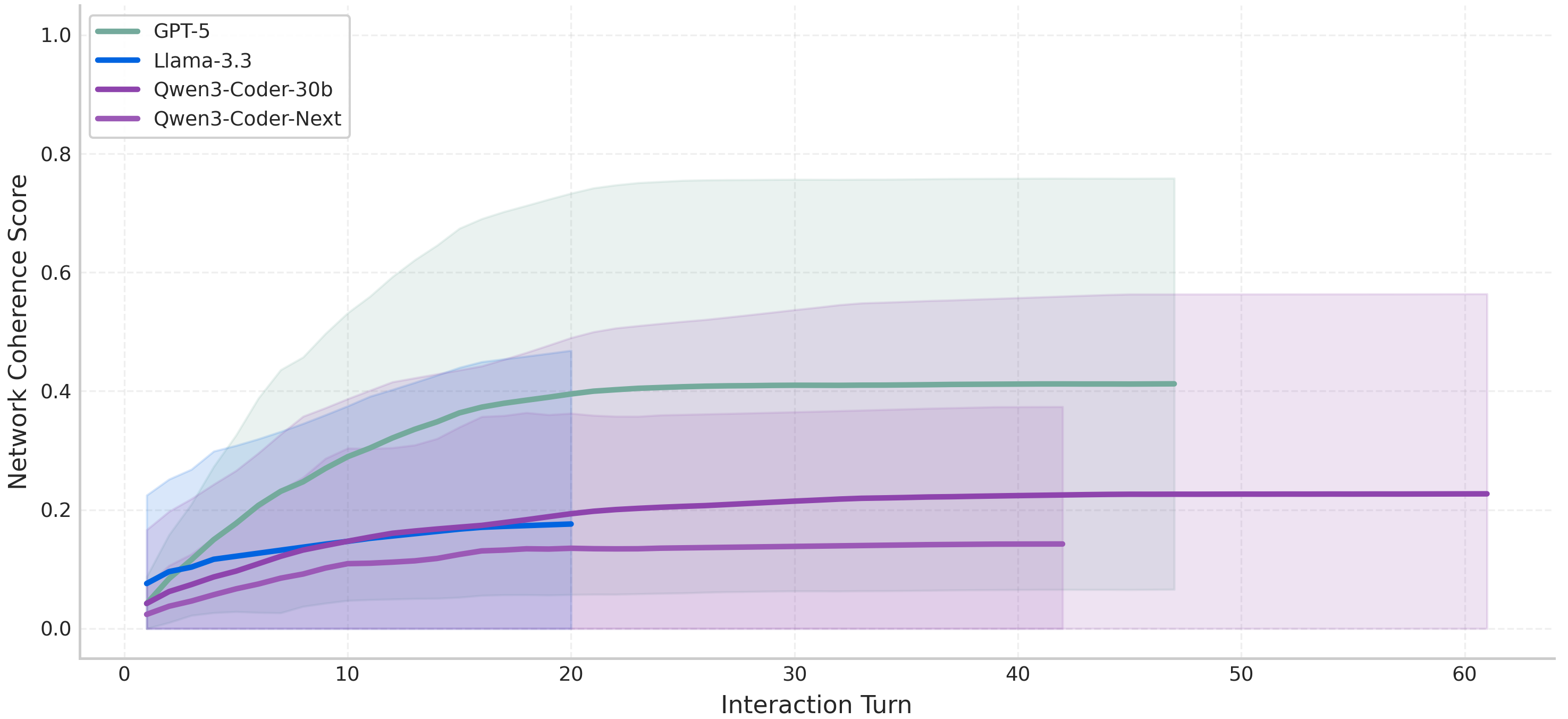}
\caption{Network coherence trajectories (mean $\pm 1\sigma$). GPT-5 maintains the most monotonic progression; all models exhibit coherence drops on OSPF tasks.}
\label{fig:coherence}
\end{figure}

As Figure~\ref{fig:coherence} shows, GPT-5 largely maintains monotonic progress by committing to compact action sequences and avoiding speculative reconfigurations. In contrast, Qwen3-Coder-30b exhibits a distinctive failure mode: its exploration ratio of 1.688 (diagnostic to constructive commands) reveals excessive diagnostic cycling i.e., issuing many \texttt{show} commands but rarely committing to configuration, leading to stagnation before eventual coherence collapse. Qwen3-Coder-Next and Llama-3.3 terminate interactions quickly, with Llama-3.3 averaging only 3.6 turns per run, avoiding prolonged collapse at the cost of near-zero task completion. On OSPF tasks, all four models plateau at approximately 1/4 properties satisfied before losing coherence.

\subsubsection{Exploration Meltdown}
An ``Exploration Meltdown'' \cite{zhu2024where} occurs when an agent's adaptation to negative feedback becomes pathological. The benchmark detects five meltdown signals: Command Loops (identical command repeated $\geq$4 times consecutively), Destructive Spirals ($\geq$3 destructive commands in any 5-command sliding window), Cognitive Stagnation ($>$25 turns with score $<$0.1), Diagnostic Fixation (10 consecutive diagnostic-only steps), and Premature Submission (submitting with score $<$0.30).

\begin{figure}[htbp]
\centering
\includegraphics[width=\columnwidth]{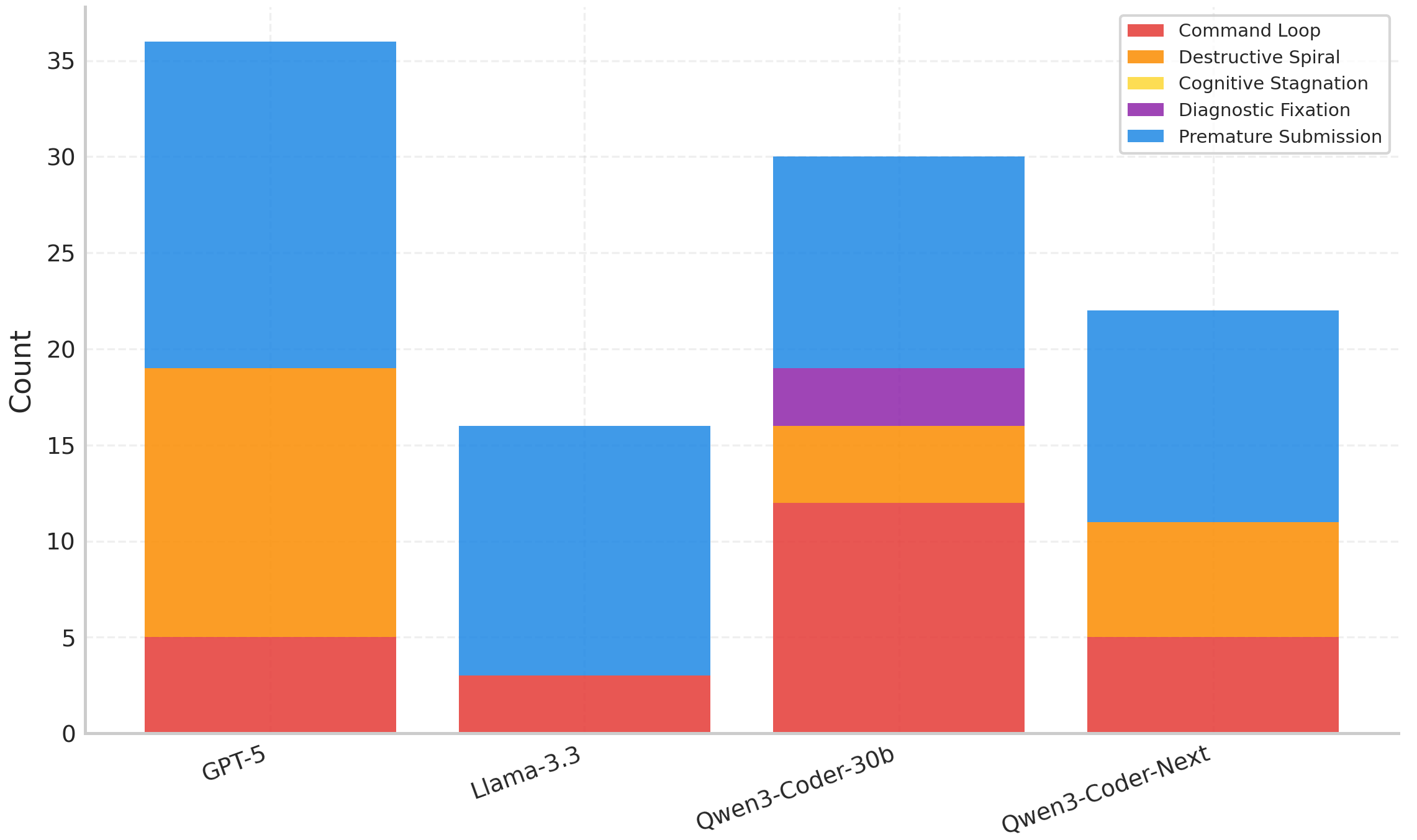}
\caption{Distribution of meltdown signals across models. GPT-5 suffers the highest absolute meltdown rate (38.7\%), driven by destructive spirals under complex configurations.}
\label{fig:meltdown}
\end{figure}

Figure~\ref{fig:meltdown} reveals that GPT-5, despite leading in performance, also exhibits the \emph{highest} meltdown rate (38.7\%) -- a consequence of its longer, more ambitious exploration attempts. Qwen3-Coder-30b (29.3\%) and Qwen3-Coder-Next (24.0\%) suffer moderate meltdown rates characterized by premature submission events. Llama-3.3's lowest meltdown rate (21.3\%) is not indicative of superior stability: its extremely short runs (3.6 turns average) mean most explorations are aborted before any pathological loop can develop. This confirms that the benchmark's behavioral signals expose qualitatively distinct failure modes that a single success/failure metric cannot differentiate (RQ3).

\subsubsection{Token Efficiency}
Evaluating multi-turn agents requires measuring the cost of convergence, not just the outcome.

\begin{figure}[htbp]
\centering
\includegraphics[width=\columnwidth]{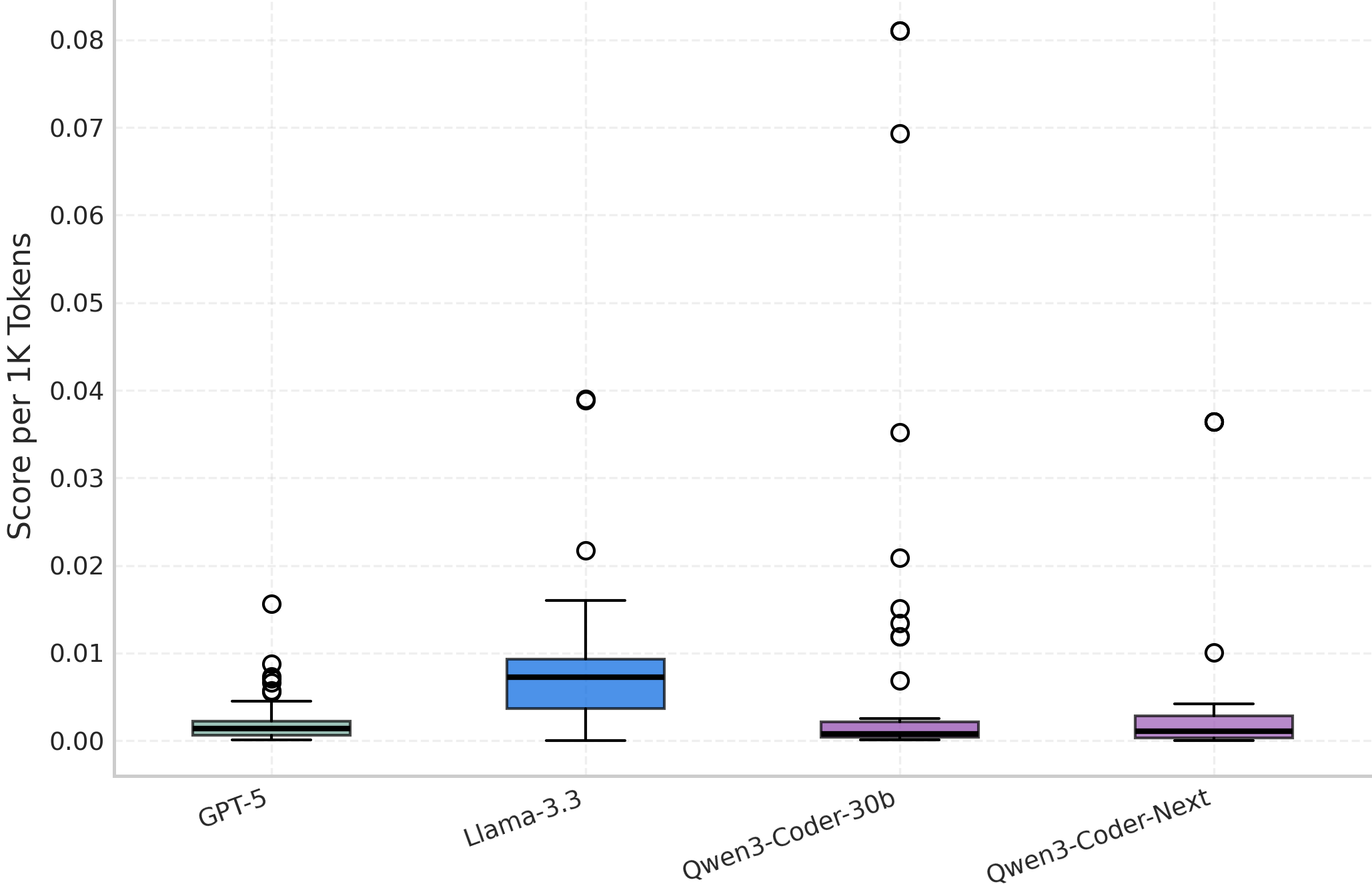}
\caption{Token efficiency (score per 1K tokens) vs.~overall score. Models in the upper-right quadrant are both high-performing and efficient.}
\label{fig:efficiency}
\end{figure}

Figure~\ref{fig:efficiency} reveals striking efficiency disparities. Despite achieving the highest absolute score, GPT-5 is the \emph{least token-efficient} model on average. Llama-3.3 achieves the highest average per-token efficiency, but its low absolute scores mean that efficiency is driven by early termination rather than skilled operation.  While Qwen3-Code-Next remained the model with neither strong performance nor token economy.

Ultimately, quantifying coherence collapse, meltdowns, and token efficiency fulfills RQ3 by exposing dynamic failure modes wholly invisible to static evaluations.

\section{Future Work}

Our results suggest two research directions. First, the benchmark can assess how \textit{multi-agentic design patterns} (e.g., sequential agents, hierarchical structures) can be used to mitigate identified behavioral failure modes like coherence collapse and the meltdown rate by providing task-specific agents e.g., planner agent. 
Second, expanding the task suite to encompass multi-vendor platforms (e.g., Juniper, Arista) and complex scenarios (e.g., SD-WAN, large-scale BGP) would stress-test agent generalization and provide richer difficulty gradients.

\section{Conclusion}
To address the limitations of static, one-shot testing, we introduced NetAgentBench: an FSM-formalized benchmark that guarantees deterministic and bounded execution for dynamic agent evaluation. Our empirical study of four state-of-the-art LLMs across 300 benchmark runs revealed model performance hierarchies and a universal failure pattern: meltdown rates between 21\% and 39\% and near-zero success beyond basic tasks. These failures are qualitatively distinct: excessive diagnostic cycling, destructive spirals, and early abort behaviors, all invisible to static one-shot evaluation. Ultimately, NetAgentBench bridges a critical measurement gap, providing the mathematically grounded, state-centric foundation necessary to rigorously evaluate and safely deploy trustworthy autonomous network agents.

\bibliographystyle{IEEEtran}
\bibliography{references}

\end{document}